\newtheorem{theorem}{Theorem}
\theoremstyle{definition}
\theoremstyle{remark}
\title{\textbf{Algorithmic Temperature Induced by Adopted Regular Universal Turing Machine}}
\author{Kentaro IMAFUKU\\
National Institute of Advanced Industrial Science and Technology (AIST)}
\date{\empty}
\begin{document}
\maketitle

\begin{abstract}
We prove that an effective temperature naturally emerges from the algorithmic structure of a regular universal Turing machine (UTM), without introducing any external physical parameter.
In particular, the redundancy growth of the machine's wrapper language induces a Boltzmann--like exponential weighting over program lengths, yielding a canonical ensemble interpretation of algorithmic probability. This establishes a formal bridge between algorithmic information theory and statistical mechanics, in which the adopted UTM determines the intrinsic ``algorithmic temperature.''
We further show that this temperature approaches its minimum limit in the binary case  under the universal mixture (Solomonoff distribution), and discuss its epistemic meaning as the resolution level of an observer.
\end{abstract}

\noindent\textbf{Keywords:} Algorithmic thermodynamics, Regular universal Turing machine, Algorithmic temperature, Boltzmann distribution in computation, Redundancy and coarse-graining, Observer dependence, Equilibrium in program space, Kolmogorov--Solomonoff framework

\section{Introduction}
The parallels between computation and physics have fascinated researchers for decades~\cite{szilard1929,jaynes1957,landauer1961,bennett1982,zurek1989,leff2002}.
In statistical mechanics, temperature and equilibrium emerge from the exponential growth of microstates with energy.
A natural question is whether analogous concepts exist in the algorithmic world:
can equilibrium and temperature arise not as metaphors but as intrinsic features of computation itself?

Algorithmic information theory, initiated by Kolmogorov~\cite{kolmogorov1965} and Chaitin~\cite{chaitin1975,chaitin1987},  
established the foundation by defining complexity in terms of program length.  
Solomonoff~\cite{solomonoff1964a,solomonoff1964b} introduced the universal prior,
assigning to each program $p$ of length $|p|$ a weight $2^{-|p|}$,
thereby embedding a probabilistic measure into computation.
Levin~\cite{levin1974} further clarified the probabilistic and conservation aspects of algorithmic information,  
connecting coding and probability in a unified framework.  

A thermodynamic analogy was made explicit by Tadaki~\cite{tadaki2002,tadaki2008,tadaki2019},  
who defined partition functions of the form  
$Z(\beta)=\sum_p e^{-\beta |p|}$ and interpreted $\beta$ as an externally imposed inverse temperature.  
Manin and Marcolli~\cite{manin2009a,manin2009b} further recast this analogy  
within the framework of $C^*$-dynamical systems, in which program lengths act as generators  
of a one-parameter automorphism group and equilibrium states are characterized  
by the Kubo--Martin--Schwinger (KMS) condition.  
Their construction provides an elegant operator-algebraic formulation of algorithmic thermodynamics,  
highlighting deep connections between computation, information, and noncommutative geometry.  
Nevertheless, the temperature in this framework is introduced formally---within the algebraic apparatus--- 
rather than emerging from the internal combinatorial structure of computation itself.  
It thus represents a mathematically refined but externally parameterized description.

Baez and Stay~\cite{baez2012} subsequently developed a complementary and more accessible formulation,  
interpreting program statistics in direct analogy with energy ensembles in statistical mechanics  
and introducing a dictionary between algorithmic and thermodynamic quantities.  
Their ``algorithmic thermodynamics'' made the analogy operational, but, as in earlier approaches,  
the temperature remained an imposed parameter rather than an emergent property.

A step toward a more structural understanding was taken with the introduction of  
\emph{regular UTMs}~\cite{imafuku2025a}, designed to clarify the redundancy structure of programs  
and to analyze algorithmic distributions in thermodynamic terms.  
The purpose of this construction was not to derive a temperature, but to provide  
a canonical way to factor programs into \emph{core codes} and \emph{regular wrapper families}.  
This framework, however, implicitly contained the structural ingredients for temperature to arise:  
the exponential growth of wrapper multiplicities.

In previous work~\cite{imafuku2025a}, a finite-temperature framework was introduced  
to describe measures of algorithmic similarity, starting from the observation  
that program distributions can be expressed in Boltzmann form.  
That approach provided new insights into how temperature can modulate algorithmic similarity,  
but the temperature itself was treated phenomenologically: it was introduced as an external parameter  
rather than derived from the internal structure of computation.  
The present study addresses this gap by showing that, once regular UTMs are  
adopted as the underlying computational model, an intrinsic parameter  
$\gamma=\ln\mu$ naturally emerges from their redundancy structure.  (Here, `adopting' a regular UTM does not mean physically executing computation on that machine, but rather adopting it as an interpretive framework that defines how program space is decomposed and observed.)
We interpret this $\gamma$ as an \emph{algorithmic temperature} determined not externally  
but by the adopted machine itself.

In contrast to previous approaches, algorithmic temperature here emerges directly from  
the machine's internal structure, in close parallel to the way physical temperature  
arises from the combinatorial growth of microstates in statistical mechanics.  
The redundancy structure of the adopted regular UTM thus plays the role of an  
\emph{algorithmic environment}, fixing the effective temperature at which the  
algorithmic world is observed.

Traditional algorithmic information theory emphasizes the universality of UTMs:  
descriptive complexity differs only up to an additive constant across machines.  
Our approach preserves this universality but reveals an additional relative structure:  
once redundancy is taken into account, different regular UTMs induce different effective temperatures.  
Thus universality is not abandoned but refined---each adopted UTM defines a distinct  
computational environment that fixes the thermodynamic laws under which programs are observed.

\paragraph{Relativity of the core--wrapper distinction.}
The decomposition of a program into a ``core'' and a ``wrapper'' is not absolute  
but depends on the adopted regular UTM.  
Each regular UTM defines which parts of a program are regarded as essential and which as redundant,  
thereby determining both the effective complexity and the emergent temperature $\gamma$.  
In this sense, a regular UTM functions as an \emph{observational framework} that partitions program space  
into meaningful (core) and inessential (wrapper) components.  
Different regular UTMs therefore correspond to distinct computational perspectives---  
analogous to different coarse-grainings in physics that yield different thermodynamic descriptions.

In this perspective, interpreting program length as an ``energy'' and introducing a temperature  
corresponds to viewing the program space through the lens of a chosen regular UTM.  
Each such adoption provides a distinct coarse-graining of program redundancy,  
thereby endowing the algorithmic world with its own notion of equilibrium and temperature.

\section{Background on Regular UTMs}

Universal Turing machines (UTMs) provide the foundation of algorithmic information theory,
but in their most general form they contain a large degree of redundancy.
A single output $o$ can be produced by infinitely many programs, often by trivially extending
a shorter program with irrelevant suffixes.
For statistical-mechanical analogies, however, such redundancy must be handled in a principled way;
otherwise, counting arguments become ill-defined.

To address this difficulty, the author's earlier work~\cite{imafuku2025a} introduced the notion of a \emph{regular UTM}.
The key idea was to separate the essential, functional part of a program from its regular, redundant
extensions.
In a regular UTM, every program $p$ admits a unique factorization $p = w \Vert q$, into a \emph{wrapper} $w$ drawn from a fixed, output-independent regular language $W$,
and a \emph{core} code $q$ executed by a fixed reference machine $U_0$, such that $U(w\Vert q) = U_0(q)$.
Here $p=w\Vert q$ should not be read as a mere syntactic concatenation.
Rather, the wrapper $w$ specifies a regular family of redundant transformations---
such as inserting NOPs, adding comments, or appending irrelevant suffixes---applied to the core program $q$.
The regularity condition ensures that this factorization is always well-defined and unique.

In the present framework, we fix the core alphabet as binary,
$\Sigma_{\mathrm{core}} = \{0,1\}$, so that $|q|$ is measured in bits,
and let the wrapper language $W \subseteq \Sigma_{\mathrm{wrap}}^{*}$ be defined
over an alphabet $\Sigma_{\mathrm{wrap}}$ of size $b \ge 2$.
Wrapper length $|w| = \Delta$ is thus measured in $\Sigma_{\mathrm{wrap}}$-symbols.
This separation emphasizes that cores and wrappers play distinct roles:
the cores carry the essential algorithmic content,
while the wrappers contribute the redundant multiplicities that define the computational environment.

For each $\Delta \ge 0$, let $a_\Delta$ denote the number of wrappers of length $\Delta$.
A key structural property of regular UTMs is that these wrapper families grow at most exponentially,
$a_\Delta \;\asymp\; c\,\mu^\Delta, \qquad 1 \le \mu < b$,
independently of the output $o$.
Consequently, the emergent parameter $\gamma = \ln \mu$ ranges over
$0 \le \gamma < \ln b$.
This growth rate will provide the structural source of the Boltzmann factor
that emerges in the next section.

The \emph{ground/core length} $K(o)$ is the minimal $|q|$ such that $U_0(q)=o$.
This decomposition isolates the ``essential'' part of a program (the core)
from the ``redundant'' degrees of freedom (the wrappers).

In analogy with statistical mechanics, the core programs $q$ may be viewed as
\emph{macro-descriptions}, while the wrappers $w$ play the role of
\emph{micro-variations}.
A single core thus corresponds to a macrostate, and the multiplicity of its
wrappers provides the microstate degeneracy that underlies the Boltzmann factor.
This structural view, first formalized in the author's earlier work,
here becomes the foundation for deriving an intrinsic notion of temperature.

\section{Emergent Algorithmic Temperature}

Before presenting the formal derivation, it is helpful to visualize
how a regular UTM transforms a uniform ensemble of programs into
a Boltzmann--weighted distribution over core lengths.
Figure~\ref{fig:factorization_lens} summarizes this process schematically.

\begin{figure}[t]
  \centering
  \includegraphics[width=0.7\linewidth]{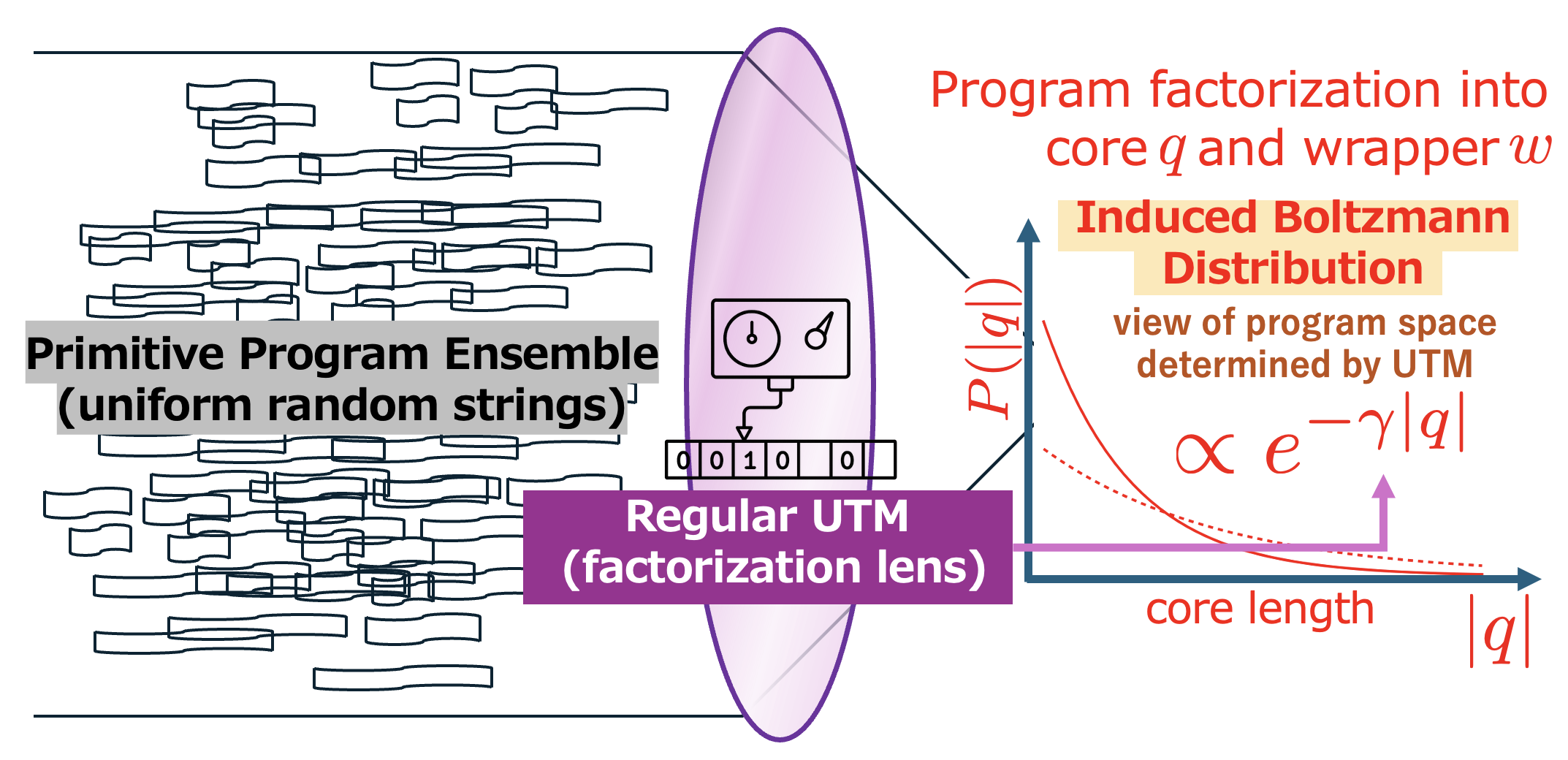}
\caption{
Schematic overview of how algorithmic temperature emerges
from a regular UTM. 
The primitive ensemble consists of all strings up to length $D$,
sampled uniformly, regardless of whether they are accepted programs.
When viewed through a regular UTM (the ``factorization lens''),
the admissible programs are uniquely decomposed into a core $q$ and a wrapper $w$
with $U(p)=U_0(q)$. 
The exponential redundancy of admissible wrappers induces
a Boltzmann--like distribution over core lengths,
$P(|q|)\propto e^{-\gamma|q|}$, where $\gamma=\ln\mu$
is determined by the growth rate of wrappers.
This transformation corresponds to observing the computational world
through a particular algorithmic environment, fixed by the chosen UTM.}
  \label{fig:factorization_lens}
\end{figure}

We now show how regular UTMs naturally induce a Boltzmann--like factor for outputs.
We start from the uniform measure over programs up to a cutoff and then take the cutoff to infinity.
The exponential growth of wrappers guarantees the emergence of an effective temperature parameter.

\begin{theorem}[Emergent Algorithmic Temperature]
Let $U$ be a regular UTM whose wrapper language over an alphabet of size $b$ has counting sequence $(a_\Delta)_{\Delta\ge0}$ with
$a_\Delta \asymp c\,\mu^\Delta$ for some $1 \le \mu < b$, and set $\gamma:=\ln \mu$.
Consider the uniform distribution over all programs $p=w\Vert q$ of the form ``wrapper $w$'' followed by ``core $q$'' with \emph{total excess length} at most $D$, i.e. $|w|+|q|\le D$.
Then the induced distribution over outputs $o$ converges, as $D\to\infty$, to
\[
P(o) \;\propto\; \sum_{L \ge 0} m_o^{(L)} \, e^{-\gamma L},
\]
where $m_o^{(L)}$ is the number of core programs of length $L$ that produce $o$ on the reference machine $U_0$.
\end{theorem}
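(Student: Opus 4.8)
The plan is to reduce the statement to an explicit counting identity and then read the Boltzmann factor off the geometry of a truncated geometric series. First I would fix an output $o$ and count the admissible pairs $(w,q)$ contributing to it. By uniqueness of the wrapper/core factorization, the admissible programs with $|w|+|q|\le D$ are in bijection with pairs $(w,q)$ where $w$ is a wrapper, $U_0$ halts on $q$, and $|w|+|q|\le D$; since $U(w\Vert q)=U_0(q)$, those mapping to $o$ are counted by
\[
N_D(o)\;=\;\sum_{L=0}^{D} m_o^{(L)}\,A_{D-L},\qquad A_M:=\sum_{\Delta=0}^{M} a_\Delta ,
\]
where $m_o^{(L)}$ is the number of length-$L$ cores producing $o$ and $A_M$ is the number of wrappers of length at most $M$ (set $A_M:=0$ for $M<0$). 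The uniform law on admissible pairs with $|w|+|q|\le D$ then induces $P_D(o)=N_D(o)\big/\sum_{o'}N_D(o')$, and the theorem is the claim that this ratio converges as $D\to\infty$.

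Next I would substitute the growth law $a_\Delta\asymp c\,\mu^\Delta$. For $\mu>1$ the partial sums obey $A_M=\tilde c\,\mu^{M}\bigl(1+o(1)\bigr)$ as $M\to\infty$ with $\tilde c=c\mu/(\mu-1)$, together with a uniform bound $A_M\le C\,\mu^{M}$ valid for all $M\ge 0$ (the endpoint $M=0$ and the additive constant in the geometric sum are harmless). Hence $\mu^{-D}A_{D-L}=\tilde c\,\mu^{-L}\bigl(1+o(1)\bigr)$ for each fixed $L$, uniformly dominated by $C\,\mu^{-L}=C\,e^{-\gamma L}$. The crucial structural fact surfaces here: the factor $\mu^{D}$ is independent of $o$, so it is the same in the numerator and denominator of $P_D(o)$ and cancels; what survives is precisely the output-dependent weight $\mu^{-L}=e^{-\gamma L}$. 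This $\mu^{D}$ is the partition function of the wrapper reservoir, and dividing it out is the algorithmic counterpart of passing from a joint system--bath distribution to the canonical weight on the system alone.

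Then I would pass to the limit by dominated convergence, twice. For fixed $o$, the bound $\mu^{-D}m_o^{(L)}A_{D-L}\le C\,m_o^{(L)}e^{-\gamma L}$ lets the sum and the limit be exchanged, giving $\mu^{-D}N_D(o)\to\tilde c\,Z_o$ with $Z_o:=\sum_{L\ge0}m_o^{(L)}e^{-\gamma L}$, provided this series converges. Applying the same uniform bound after summing over the (infinitely many) outputs yields $\mu^{-D}\sum_{o'}N_D(o')\to\tilde c\sum_{o'}Z_{o'}$, provided the partition sum $\sum_{o'}Z_{o'}=\sum_{L\ge0}e^{-\gamma L}\,h(L)$ is finite, where $h(L)$ is the number of length-$L$ binary strings on which $U_0$ halts. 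Dividing, $P_D(o)\to Z_o\big/\sum_{o'}Z_{o'}$, which is the asserted $P(o)\propto\sum_L m_o^{(L)}e^{-\gamma L}$. The boundary case $\mu=1$ ($\gamma=0$) is treated separately: there $A_M\sim cM$, so one normalizes by $D$ rather than by $\mu^{D}$, and the same steps give $P(o)\propto\sum_L m_o^{(L)}$ whenever that sum (and $\sum_L L\,m_o^{(L)}$) converges.

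The step I expect to carry the real weight is the normalizability of the ensemble: verifying $\sum_{o}Z_o<\infty$ and the uniform domination legitimizing the interchange of $\lim_D$ with the sum over outputs. Since $h(L)\le 2^L$, the crude estimate $\sum_o Z_o\le\sum_L(2e^{-\gamma})^L$ settles this immediately when $2e^{-\gamma}<1$, i.e.\ when $\mu>2$; for $1<\mu\le 2$ this bound is useless, and one must instead posit that the halting set of $U_0$ has exponential growth rate at most $\mu$ (a Kraft-type constraint on $U_0$ being the natural vehicle), or else accept $\sum_o Z_o$ as defining a sub-probability and renormalize. By contrast, the combinatorial identity for $N_D(o)$, the geometric-sum asymptotics for $A_M$, and the cancellation of $\mu^{D}$ are routine once this analytic point is secured, so the argument should be short modulo the normalizability discussion.
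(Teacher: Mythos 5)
Your proposal follows essentially the same route as the paper's proof: the same counting identity $N_D(o)=\sum_{L=0}^{D}m_o^{(L)}A_{D-L}$, the same geometric asymptotics $A_M=\Theta(\mu^{M})$ (with the linear case $\mu=1$ split off), and the same cancellation of the output-independent factor $\mu^{D}$ in the ratio $P_D(o)$. The one place where you genuinely diverge is the analytic justification of the limit, and it is worth comparing the two treatments. You pass to the limit by dominated convergence, which requires $Z_o=\sum_L m_o^{(L)}e^{-\gamma L}$ and the full partition sum $\sum_{o'}Z_{o'}$ to converge, and you correctly observe that the crude bound $h(L)\le 2^{L}$ settles this only for $\mu>2$, leaving $1<\mu\le 2$ (which includes the Solomonoff endpoint $\mu=2$, where Kraft gives only $\sum_L h(L)2^{-L}\le 1$ and $\sum_L h(L)\mu^{-L}$ can still diverge for $\mu<2$) in need of an extra hypothesis or a renormalization convention. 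The paper instead sidesteps convergence of the partition sum altogether: it assumes only that the Abel ratios $\lim_{r\uparrow\mu^{-1}}M_o(r)/\sum_{o'}M_{o'}(r)$ exist for the generating functions $M_o(z)=\sum_L m_o^{(L)}z^{L}$, and asserts that under the wrapper asymptotics this Abel ratio equals the cutoff limit of $P_D(o)$, so the Boltzmann form survives even when $Z$ diverges. Your version is cleaner and fully rigorous where $\sum_{o'}Z_{o'}<\infty$; the paper's version is broader in scope but leans on an unproved Tauberian-flavored identification of the cutoff limit with the Abel limit. Your explicit flagging of the $1<\mu\le 2$ regime as the load-bearing analytic point is accurate and is, if anything, more candid than the paper's ``mild condition'' phrasing; there is no gap in your argument beyond the hypotheses you yourself state.
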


\begin{proof}[Proof sketch]
We proceed in three steps.

\smallskip
\noindent
1. \emph{Uniform measure with cutoff.}
For each cutoff $D$, count all programs $p=w\Vert q$ with wrapper length $|w|=\Delta$ and core length $|q|=L$ such that $\Delta+L\le D$.
The number of such programs producing output $o$ is
\[
  N_o(D) = \sum_{L+\Delta\le D} m_o^{(L)}\,a_\Delta
         = \sum_{L=0}^{D} m_o^{(L)}\,A_{D-L},
\]
where $A_n := \sum_{\Delta=0}^{n} a_\Delta$ is the cumulative count of wrappers up to length $n$.

\smallskip
\noindent
2. \emph{Wrapper growth and tail asymptotics.}
By regularity of the wrapper language (See Appendix A.9 of \cite{imafuku2025a}  for proof of exponential growth), there exist constants $\mu\in[1,b)$ and $C_\pm>0$ such that for all sufficiently large $\Delta$,
\[
  C_-\,\mu^\Delta \;\le\; a_\Delta \;\le\; C_+\,\mu^\Delta.
\]
Equivalently, $a_\Delta = c\,\mu^{\Delta}(1+\varepsilon_\Delta)$ with $\varepsilon_\Delta\!\to\!0$ as $\Delta\!\to\!\infty$. 
Summing gives
\[
  A_n = \Theta\!\bigl(\Phi(n)\bigr), \qquad
  \Phi(n) :=
  \begin{cases}
    \dfrac{\mu^{\,n+1}}{\mu-1}, & \mu>1,\\[6pt]
    n+1, & \mu=1.
  \end{cases}
\]

\smallskip
\noindent
3. \emph{Cancellation and emergence of the Boltzmann factor.}
\begin{itemize}
\item
\underline{Case $\boldsymbol{\mu>1}$.}
By the regularity assumption, $A_{D-L}=\tfrac{\mu^{D-L+1}}{\mu-1}(1+o(1))$
uniformly for $L$ in any finite range.
The core programs form a prefix-free set, which ensures that their total number
does not exceed exponential order.
For each output $o$, we denote by $m_o^{(L)}$ the number of cores of length $L$
that produce it.
We do not require exponential bounds on $m_o^{(L)}$ nor the convergence of each $\sum_L m_o^{(L)}\mu^{-L}$ separately. It suffices that the Abel ratios exist: letting $M_o(z)=\sum_{L\ge0} m_o^{(L)}z^L$, the limit $\lim_{r\uparrow \mu^{-1}} M_o(r)/\sum_{o'}M_{o'}(r)$ exists. Under the wrapper asymptotics $A_n=\frac{\mu^{n+1}}{\mu-1}(1+o(1))$, this Abel ratio equals the cutoff limit of $P_D(o)$; hence the Boltzmann weight follows even when the partition sum diverges.
Under this mild condition, the total relative error in $N_o(D)$
vanishes as $D\to\infty$.
Substituting the asymptotic form then yields
\[
N_o(D)
= \Theta(\mu^{D})\sum_{L=0}^{D} m_o^{(L)}\,\mu^{-L}(1+o(1)).
\]
The common $\Theta(\mu^D)$ factor is independent of $o$ and cancels in
$P_D(o):=N_o(D)/\sum_{o'}N_{o'}(D)$, giving
\[
\lim_{D\to\infty}P_D(o)
\;=\;\frac{\sum_{L\ge0} m_o^{(L)}\,\mu^{-L}}
{\sum_{o'}\sum_{L\ge0} m_{o'}^{(L)}\,\mu^{-L}}
\;=\;\frac{\sum_{L\ge0} m_o^{(L)}\,e^{-\gamma L}}
{\sum_{o'}\sum_{L\ge0} m_{o'}^{(L)}\,e^{-\gamma L}},
\quad \gamma=\ln\mu.
\]
\item
\underline{Case $\boldsymbol{\mu=1}$.}
Here $A_{D-L}\sim D-L+1$, so
\[
N_o(D)
= \sum_{L=0}^{D} m_o^{(L)}(D-L+1)
= (D+1)\sum_{L=0}^{D} m_o^{(L)} - \sum_{L=0}^{D} L\,m_o^{(L)}.
\]
For large D, the leading factor (D+1) is common to all outputs
and thus cancels when ratios are taken.
Provided that the relative proportions of $m_o^{(L)}$ remain well-defined
(even if their total sum diverges), the limiting distribution takes the same structural form,
\[
\lim_{D\to\infty} P_D(o)
= \frac{\sum_{L\ge0} m_o^{(L)}}
{\sum_{o'}\sum_{L\ge0} m_{o'}^{(L)}}
= \frac{\sum_{L\ge0} m_o^{(L)}\,e^{-\gamma L}}
{\sum_{o'}\sum_{L\ge0} m_{o'}^{(L)}\,e^{-\gamma L}},
\quad \gamma=0.
\]
This limiting case corresponds to a zero inverse temperature,
where wrapper redundancy grows only linearly and no exponential suppression occurs.
The normalization may diverge, but the formal Boltzmann structure of the distribution remains intact.
\end{itemize}
\end{proof}

\begin{figure}[t]
  \centering
  \includegraphics[width=0.7\linewidth]{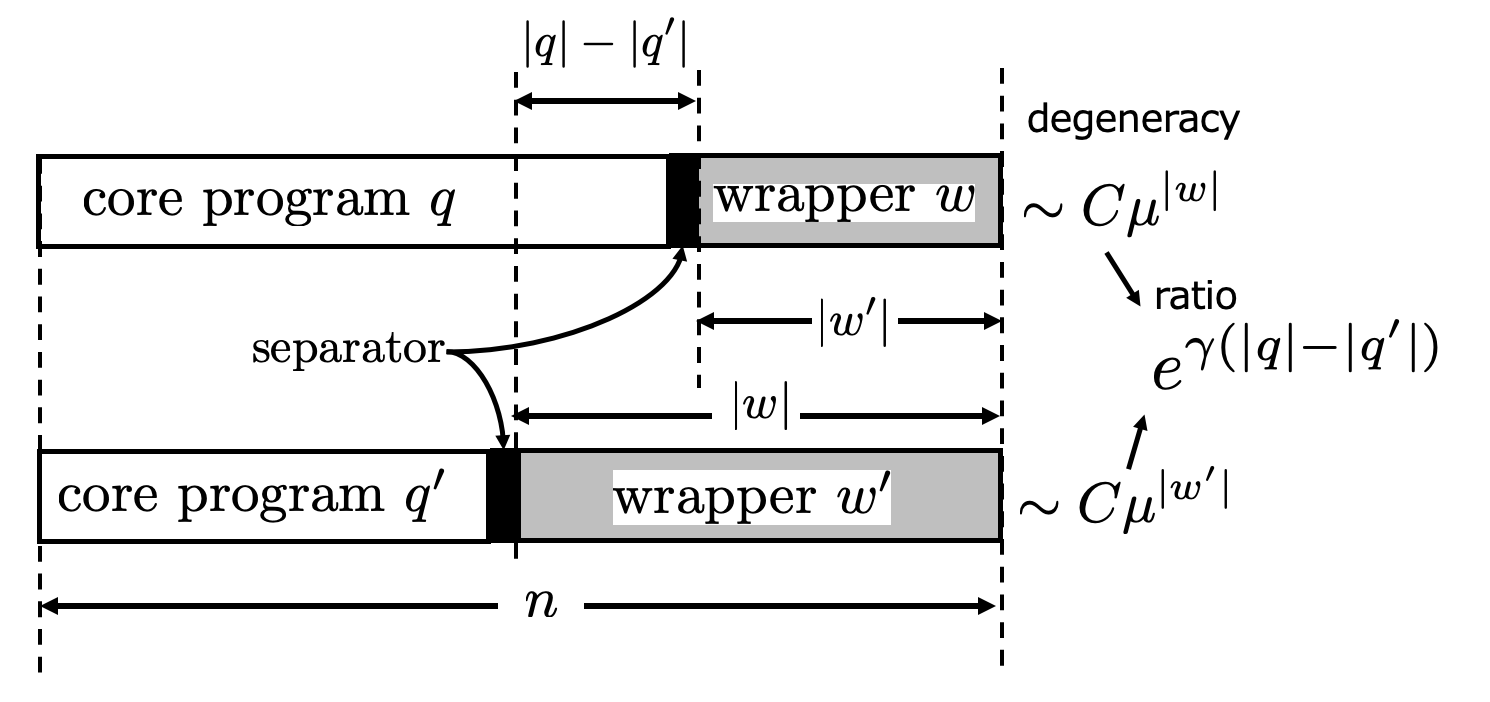}
  \caption{
  Schematic illustration of how wrapper redundancy induces
  Boltzmann weighting. Two programs with cores $q$ and $q'$ 
  are extended by wrappers $w$ and $w'$, respectively, 
  so that their total lengths are equal.
  The number of admissible wrappers of length $|w|$ 
  grows as $\sim C\mu^{|w|}$.
  Consequently, the degeneracy ratio between cores of different lengths
  scales as $e^{\gamma(|q|-|q'|)}$ with $\gamma=\ln\mu$,
  yielding the Boltzmann factor $e^{-\gamma |q|}$ in the induced distribution.}
  \label{fig:degeneracy_ratio}
\end{figure}

Intuitively, for two core programs $q$ and $q'$ of different lengths,
their possible wrappers must compensate for the difference
so that the total program length remains within the cutoff.
Since the number of admissible wrappers of length $\Delta$
grows exponentially as $\mu^{\Delta}$, 
the relative abundance of $q$ versus $q'$ scales as 
$e^{\gamma(|q|-|q'|)}$ (Fig.~\ref{fig:degeneracy_ratio}).
This directly leads to the Boltzmann factor $e^{-\gamma |q|}$.

Note that the normalization factor
\[
Z(\gamma) := \sum_{o}\sum_{L\ge0} m_o^{(L)} e^{-\gamma L}
\]
plays exactly the role of a partition function. 
In this analogy, the ``energy'' is identified with the core length $L$, 
and the exponential weight $e^{-\gamma L}$ is the Boltzmann factor. 
Hence the parameter $\gamma$ is naturally interpreted as an 
\emph{inverse algorithmic temperature}. 
Thus $Z(\gamma)$ serves as a genuine partition function, 
normalizing the global distribution over outputs. 
Equivalently, one may define
\[
Z_o(\gamma) := \sum_{L\ge0} m_o^{(L)} e^{-\gamma L},
\]
so that $P(o)=Z_o(\gamma)/Z(\gamma)$, making explicit the connection to \cite{imafuku2025a}. 

As a special case, if the wrapper language is the full binary language, then $a_\Delta=2^\Delta$ and hence $\mu=2$, 
so that $\gamma=\ln 2$. 
In this case the induced weight for core programs of length $L$ is exactly $2^{-L}$, 
yielding Solomonoff's universal distribution:
\[
P(o) \;\propto\; \sum_{L\ge0} m_o^{(L)} 2^{-L}.
\]
This shows that Solomonoff’s prior arises as the minimum-temperature limit for the case $b=2$, corresponding to the full binary wrapper language.
For larger wrapper alphabets ($b > 2$), even lower temperatures ($\gamma > \ln 2$) are in principle attainable.

In analogy with physics, the wrapper degrees of freedom can be viewed as ``dressing'' the core program. 
Eliminating these degrees of freedom yields an effective distribution with a Boltzmann factor. 
In this sense, our construction can be viewed as a renormalization procedure in the algorithmic domain.

\section{Algorithmic Thermodynamics: Interpretation}

Theorem~1 establishes that every regular UTM induces an inverse temperature 
$\gamma=\ln\mu$, determined entirely by the exponential growth rate of wrappers. 
This section interprets this result in thermodynamic and information-theoretic terms.

\subsection{Algorithmic equilibrium}

The limiting distribution
\[
P(o) \;\propto\; \sum_{L\ge0} m_o^{(L)} e^{-\gamma L}
\]
takes exactly the Boltzmann form, with program length playing the role of 
energy. This justifies speaking of an \emph{algorithmic equilibrium}: 
given a uniform ensemble of programs, the induced statistics of outputs 
stabilize to a distribution governed by the machine's intrinsic temperature~$\gamma$.
In this sense, a regular UTM defines an effective thermodynamic bath 
that determines the equilibrium structure of program space.

\subsection{Temperature as a machine property}

Unlike previous approaches in which temperature was externally imposed, 
here $\gamma$ depends only on the structure of the adopted regular UTM. 
It is independent of the output $o$ and thus represents a universal 
thermodynamic parameter attached to the computational model itself. 
This parallels physical thermodynamics, where temperature is a property 
of the bath rather than of individual microstates. 

From this standpoint, a regular UTM acts as an \emph{algorithmic environment}: 
its redundancy structure defines the degrees of freedom regarded as 
thermodynamically irrelevant. 
The exponential growth of admissible wrappers thus determines 
the temperature at which program space is observed.

\subsection{Relation to Solomonoff's universal prior}

Solomonoff's universal prior assigns each program $p$ a weight $2^{-|p|}$,
justified by Kraft's inequality for prefix-free codes:
shorter programs carry higher probability under a uniform distribution 
on prefix-free bit strings.
In that framework, the exponential decay with program length is imposed 
as a coding-theoretic constraint.

By contrast, in our framework the factor $e^{-\gamma L}$ is not imposed
through Kraft's inequality but arises intrinsically from the redundancy
of the wrapper language.
The core programs are assumed to form a prefix-free set,
ensuring that their total number does not grow faster than exponentially.
We only require that the weighted sums
$\sum_L m_o^{(L)}e^{-\gamma L}$ be well defined---convergent or, if divergent,
regularizable in a consistent manner---so that relative probabilities 
can be meaningfully compared.
The exponential weight itself, however, is determined solely
by the structural redundancy of the regular UTM, not by any coding convention.

\paragraph{Remark (Solomonoff as a limit).}
Our construction includes Solomonoff's prior as a limiting case: 
if the wrapper language is the full binary language, $a_\Delta=2^\Delta$, 
then $\mu=2$ and $\gamma=\ln 2$, yielding Solomonoff's measure.
Hence the universal prior corresponds to the case of maximal redundancy 
(or minimum temperature), while regular UTMs with $1\le\mu<2$ represent 
higher-temperature systems. 
This reveals that Solomonoff's prior is not unique, but one extremal 
point in a continuum of algorithmic thermodynamic regimes.
\subsection{Energy and redundancy}

A key conceptual choice in this construction is that the effective energy
entering the Boltzmann factor is associated with the core length $|q|$,
rather than the total program length $|p|=|w|+|q|$. 
The wrapper $w$ represents environmental degrees of freedom---redundant 
transformations such as padding, comments, or regular syntactic variations---
that do not change the semantic content of the computation. 
These correspond to the external combinatorial structure that gives rise 
to entropy and hence to temperature.

By contrast, the core $q$ encodes the essential algorithmic information
that determines the output.
Using $|p|$ as the energy variable would double-count redundancy, 
since the multiplicity of wrappers $a_\Delta$ is precisely what generates 
the Boltzmann factor $e^{-\gamma |q|}$.
Thus the correct correspondence is
\[
\text{Energy} \;\leftrightarrow\; |q|, 
\qquad
\text{Entropy source (bath)} \;\leftrightarrow\;
\text{wrapper multiplicity } a_\Delta .
\]
Eliminating wrappers corresponds to tracing over environmental
redundancies, yielding an effective equilibrium distribution for the core 
with inverse temperature $\gamma=\ln\mu$. 

\section{Broader Implications and Connections}

\subsection{Regular UTMs as algorithmic environments}

The choice of a regular UTM specifies which forms of redundancy 
are admissible and therefore determines the computational ``environment''
in which algorithms are observed. 
Different regular UTMs correspond to distinct observational frameworks, 
each defining what is treated as essential (core) and what as redundant (wrapper). 
This observer-dependence parallels the system--environment decomposition in physics: 
it is not absolute but relative to the chosen descriptive scale.

\subsection{Algorithmic temperature as epistemic resolution}

The parameter $\gamma$ should not be interpreted as a measure of 
computational power, but as a measure of \emph{epistemic resolution}: 
how finely an observer distinguishes among distinct core programs.
A high-temperature observer adopts a coarse-grained perspective, treating functionally similar
programs as equivalent and focusing on general functional similarity.
A low-temperature observer resolves finer distinctions, assigning weight 
primarily to compact cores. 
These perspectives are complementary rather than hierarchical:
high temperature favors flexibility and abstraction, 
while low temperature favors precision and parsimony.

\subsection{Relation to previous work~\cite{imafuku2025a}}

In earlier work~\cite{imafuku2025a}, finite-temperature parameters were 
introduced phenomenologically to interpolate between minimal and distributed 
descriptions of algorithmic similarity.
The present formulation clarifies the structural origin of such parameters:
once regular UTMs are adopted as the underlying model, the inverse temperature 
$\gamma$ emerges intrinsically from the redundancy structure of the machine.
Thus, the finite-temperature measures of the earlier work can now be 
interpreted as reflecting intrinsic properties of the chosen computational 
environment rather than externally imposed biases.

\medskip
\noindent
A further connection can be made by reconsidering the treatment of wrapper degrees of freedom.
In the present analysis, wrappers are fully integrated out, yielding an equilibrium
distribution over core programs alone.
By contrast, the earlier framework effectively retained wrappers as part of the ensemble,
introducing temperature directly at the level of full programs $p = w \Vert q$.
As discussed in Appendix~\ref{appendix a}, this ``wrapper-inclusive'' viewpoint corresponds to
stopping the coarse-graining process at an intermediate stage,
before tracing out environmental redundancies.
In this way, the distributions of~\cite{imafuku2025a} appear as special cases within
the general dressing hierarchy of regular UTMs, while the present work
completes that renormalization to reveal the intrinsic origin of the
algorithmic temperature~$\gamma=\ln\mu$.

\subsection{Observer-dependence and universality}

Both the effective energy variable (core length $|q|$) and the resulting 
temperature $\gamma$ are observer-dependent quantities, defined relative 
to the adopted regular UTM. 
Yet this does not undermine universality: what remains invariant is the 
\emph{form} of the thermodynamic relation between redundancy and equilibrium.
Different observers, each equipped with their own UTM, perceive distinct 
algorithmic landscapes that are internally consistent but not privileged. 
Universality thus manifests through equivalence classes of descriptions---
a structural relativity of observation rather than a breakdown of objectivity.

\section{Conclusion}

We have shown that regular UTMs naturally induce an inverse algorithmic temperature 
$\gamma=\ln\mu$, arising intrinsically from the exponential redundancy 
of wrapper families. 
Starting from a uniform ensemble of programs with an absolute cutoff,
the elimination of redundant wrappers yields Boltzmann weights over core lengths. 
In this way, temperature is not externally imposed but emerges from the 
machine's structural properties.

This framework unifies and extends earlier perspectives in algorithmic thermodynamics:
Solomonoff's universal prior appears as the limiting case of maximal wrapper growth,
while the finite-temperature measures previously introduced by the present author 
acquire a principled structural foundation.
What was formerly treated phenomenologically now arises as a consequence of 
the redundancy structure of computation itself.

More broadly, the findings suggest that notions such as equilibrium and temperature 
are not merely analogies but formally realizable in the algorithmic domain.
The emergent temperature reflects the observer's computational environment: 
the choice of regular UTM fixes the admissible forms of redundancy and thereby 
the effective temperature at which the algorithmic world is seen.

Formally, the resulting ensemble
\[
P(p)\propto e^{-\gamma |p|}
\]
shares the same exponential structure as the Boltzmann distribution
$e^{-\beta E}$ in statistical mechanics and the Euclideanized Feynman kernel $e^{-S_E/\hbar}$ in quantum theory.
In each case, the exponent represents a cost function---energy, action, or
descriptional length---while the prefactor ($\beta$, $\hbar^{-1}$, or $\gamma$)
sets the observer's effective resolution scale.
This structural analogy suggests that the algorithmic parameter $\gamma$
plays a role formally parallel to an inverse Planck constant,
linking thermodynamic and informational perspectives within a unified
exponential framework (see Appendix~\ref{appendix b}).

This observer-dependent yet structurally determined viewpoint opens 
paths for future research. 
In learning theory, finite algorithmic temperatures may provide a principled 
way to model controlled bias or generalization noise. 
In complexity theory, $\gamma$ offers a thermodynamic lens on the trade-off 
between program length and computational resources, suggesting an algorithmic 
analogue of energy--entropy balance in physics.

\begin{center}
\emph{Algorithmic temperature is not a metaphor borrowed from physics, 
but a structural reality of computation itself.}
\end{center}
\section*{Acknowledgments}
I am grateful to my colleagues at AIST for their various forms of support.
This study benefited from the use of ChatGPT-5 (OpenAI) for improving the manuscript's structure and English language editing.

\appendix
\section*{Appendix}
\section{Wrapper-inclusive distributions\label{appendix a}}

In the main text, we considered the distribution obtained after
eliminating wrapper degrees of freedom, which induces a Boltzmann weight
$e^{-\gamma L}$ over core lengths.
One may also consider an alternative viewpoint in which wrappers are
\emph{not} eliminated, but retained as part of the effective ensemble.
Such a construction naturally yields a ``wrapper-inclusive'' equilibrium
distribution.

This situation corresponds to that analyzed in \cite{imafuku2025a}, where
finite-temperature parameters were introduced directly into
distributions defined over full programs $p=w\Vert q$.
In the present framework, this can be interpreted as stopping the
coarse-graining (or renormalization) at an intermediate stage, without
integrating over the wrapper multiplicities.
The resulting ensemble thus describes the joint statistics of cores and
wrappers rather than the reduced equilibrium over cores alone.

From this perspective, the wrapper-inclusive distributions of
 \cite{imafuku2025a} are not external constructions but special cases of the
general dressing framework provided by regular UTMs.
The main difference is that the present work completes the
coarse-graining by integrating out wrappers entirely, thereby revealing
the intrinsic origin of the Boltzmann factor and the emergent algorithmic
temperature $\gamma=\ln\mu$.

\section{Toward an Algorithmic Interpretation of $\hbar$\label{appendix b}}
\noindent
While the present work focuses on equilibrium distributions in program space,
the same exponential structure suggests a deeper analogy between
thermodynamic, quantum, and algorithmic ensembles.
In each case, an inverse scale parameter --- $\beta$ in statistical mechanics,
$\hbar^{-1}$ in quantum theory, and $\gamma$ in algorithmic thermodynamics ---
governs the resolution at which configurations are distinguished.
From this viewpoint, $\hbar$ and $\gamma$ play parallel structural roles:
both quantify how finely the observer resolves the underlying configuration space.

\medskip
\noindent
The algorithmic ensemble derived in this work,
\[
P(p) \propto e^{-\gamma |p|},
\]
shares the same exponential form as the Boltzmann distribution
$e^{-\beta E}$ in statistical mechanics
and the Euclideanized Feynman kernel $e^{-S_E/\hbar}$ in quantum theory~\cite{feynman1965}.
Each represents a measure over possible realizations --- microstates, paths,
or programs --- governed by an inverse scale parameter that reflects
the observer's descriptive granularity.

\begin{center}
\renewcommand{\arraystretch}{1.0}
\setlength{\tabcolsep}{4pt}
\small
\begin{tabular}{@{}lccc@{}}
\toprule
 & \textbf{Statistical Mechanics} & \textbf{Quantum Theory} & \textbf{Algorithmic Thermodynamics} \\
\midrule
Kernel & $e^{-\beta E}$ & $e^{-S_E/\hbar}$ & $e^{-\gamma |p|}$ \\
Quantity & Energy $E$ & Euclidean Action $S_E$ & Program Length $|p|$ \\
Scale Parameter & $\beta = 1/k_BT$ & $\hbar^{-1}$ & $\gamma$ \\
Domain & Microstates & Paths $x(\tau)$ & Programs $p$ \\
Interpretation & Thermal ensemble & Euclidean path integral & Algorithmic ensemble \\
\bottomrule
\end{tabular}
\normalsize
\end{center}

\noindent
This correspondence can be summarized schematically as
\[
e^{-\beta E}
\;\;\longleftrightarrow\;\;
e^{-S_E/\hbar}
\;\;\longleftrightarrow\;\;
e^{-\gamma |p|}.
\]
In all three cases, the exponent represents a cost functional ---
energy, action, or descriptional length --- and the prefactor
($\beta$, $\hbar^{-1}$, or $\gamma$) determines the effective
resolution of observation.

\medskip
\noindent
From this structural viewpoint, $\hbar$ may be interpreted as a measure
of the observer's descriptive precision in the quantum domain,
just as $\gamma$ characterizes it in the algorithmic one.
Both constants are thus not merely physical or computational parameters
but structural indicators of how finely the world is resolved.
In the algorithmic case, $\gamma$ emerges intrinsically from the redundancy
structure of the chosen regular UTM, suggesting that an analogous
informational origin of $\hbar$ might underlie physical quantization itself.
This structural correspondence echoes the stochastic quantization view of field theory, where the Euclidean path integral appears as the stationary distribution of a diffusive process~\cite{parisi1981,namiki1992}.
\bibliographystyle{unsrtnat} 
\bibliography{refs202510}

\end{document}